\newtheorem{theorem}{Theorem}
\newtheorem{lemma}[theorem]{Lemma}
\renewcommand{\AA}{\mathbb{A}}
\newcommand{\RR}{\mathbb{R}}
\newcommand{\XX}{\mathbb{X}}
\newcommand{\DD}{\mathbb{D}}
\newcommand{\VV}{\mathbb{V}}
\newcommand{\B}{\mathcal{B}}
\newcommand{\I}{\mathcal{I}}
\renewcommand{\L}{\mathcal{L}}
\renewcommand{\P}{\mathcal{P}}
\newcommand{\K}{\mathcal{K}}
\newcommand{\N}{\mathcal{N}}
\newcommand{\rh}{\hat{\rho}}
\title{\LARGE \bf
Controlled density transport using Perron Frobenius generators
}
\author{Jake Buzhardt and Phanindra Tallapragada
\thanks{The authors are with the Department of Mechanical Engineering, Clemson University, Clemson, SC, USA.
        {\tt\small jbuzhar@g.clemson.edu, ptallap@clemson.edu}}%
}
\begin{document}

\maketitle
\thispagestyle{empty}
\pagestyle{empty}

\begin{abstract}
We consider the problem of the transport of a density of states from an initial state distribution to a desired final state distribution through a dynamical system with actuation.  
In particular, we consider the case where the control signal is a function of time, but not space; that is, the same actuation is applied at every point in the state space. 
This is motivated by several problems in fluid mechanics, such as mixing and manipulation of a collection of particles by a global control input such as a uniform magnetic field, 
as well as by more general control problems where a density function describes an 
uncertainty distribution or a distribution of agents in a multi-agent system. 
We formulate this problem using the generators of the Perron-Frobenius operator associated with the drift and control vector fields of the system. 
By considering finite-dimensional approximations of these operators, the density transport problem can be expressed as a control problem for a bilinear system in a high-dimensional, lifted state.  
With this system, we frame the density control problem as a problem of driving moments of the density function to the moments of a desired density function, where the moments of the density can be expressed as an output which is linear in the lifted state. This output tracking problem for the lifted bilinear system is then solved using differential dynamic programming, an iterative trajectory optimization scheme.  

\end{abstract}

\section{Introduction}
In this paper, we consider the problem of controlled density transport, where given an initial distribution of states specified by a density function, we seek to determine a control sequence to drive this initial distribution to a desired final distribution.  
We consider the case where a common control signal is applied to the entire distribution of states.
This differs from the usual formulation of swarm control and optimal transport problems, where typically each agent can select a control input independently, making the control signal a function of the states and time. 
This problem of density transport is motivated by problems of manipulation of a large collection of agents using a uniform control signal \cite{becker_iros2013,bt_pre19}. 
The transport of density also has relevance to the propagation of an uncertainty distribution arising due to uncertainty in the initial state or of a model parameter through an otherwise deterministic control system (see, e.g. \cite{mesbah2014probcon,boutselis2019gpcddp,theodorou2021receding_gpcddp}). We formulate and solve this problem using an operator theoretic approach, specifically using the generator of the Perron-Frobenius operator.  

In recent years the operator theoretic approach to dynamical systems and control has gained significant research attention \cite{budivsic2012appliedkoopmanism, brunton22_modernkoopman,ottorowley2021koopmancontrol}. 
A dynamical system can be framed in terms of such an operator either by considering the evolution of observable functions of the state using the Koopman operator or by considering the evolution of densities of states using the Perron-Frobenius operator \cite{lasota_1994}. 
The interest in these approaches is primarily due to the fact that these operators allow for a linear, although typically infinite dimensional representation of a nonlinear system.  
The linearity of these operators is useful from an analytical perspective, as it allows for the use of linear systems techniques such as the analysis of eigenvalues and eigenfunctions, but also from a computational perspective, as in many cases a useful approximation for these operators can be found by considering a finite dimensional approximation in which the operator is represented as a matrix acting on coordinates corresponding to a finite set of a set of dictionary functions \cite{dellnitz1999approximation,williams2015data,Klus16_onthenumerical}.

In applications in control systems, much of the recent work has been on developing methods involving the Koopman operator \cite{kaiser2020DataDrivenOperatorsReview,ottorowley2021koopmancontrol}, as the transformation to a space of observable functions can be viewed as a nonlinear change of coordinates which maps the system to a higher dimensional space where the dynamics are (approximately) linear \cite{korda2018linear}.  This makes the numerical approximation of the operator particularly amenable to linear control methods, such as the linear quadratic regulator (LQR) and model predictive control (MPC) \cite{korda2018linear,proctor2018dmdc,williams_edmdc}.  On the other hand, the Perron-Frobenius operator propagates densities of states forward in time along trajectories of the system, which can have multiple interpretations in the controlled setting.  For example, the Perron-Frobenius operator and the Liouville equation, the related PDE formulation, have been used to determine controls for agents in an ensemble or swarm formulation \cite{brockett2012notes,grover2018_OTgenerators}. 
Such formulations are closely related to optimal transport problems which also involve driving an initial distribution to a desired final distribution (see, e.g., \cite{grover2018_OTgenerators,chen2021optimal,halder2014geodesic}).
Formulations involving the Perron-Frobenius operator have also been used in the context of fluid flows to study the transport of distributions of fluid particles and to detect invariant or almost invariant sets \cite{tallapragada2013set,Froyland2010transport}.

Our approach involves first obtaining a finite dimensional approximation of the Perron-Frobenius generators associated with the drift and control vector fields of the system, which allow us to represent the density transport dynamics as a bilinear system in a lifted state.  With this system, we frame the density control problem as a problem of driving moments of the density function to the moments of a desired density function, where the moments of the density can be expressed as an output which is linear in the lifted state. This output tracking problem for the lifted bilinear system is then solved using differential dynamic programming (DDP), an iterative trajectory optimization scheme.  


\section{Preliminaries}
Consider first the autonomous dynamical system on 
a measure space $(\XX \subset \RR^n, \mathcal{A}, \mu)$ with a $\sigma$-algebra $\mathcal{A}$ on $\XX$ and $\mu$ a measure on $(\XX, \mathcal{A})$,
\begin{equation}
    \dot{x} =  f(x) 
\end{equation} 
and denote the associated time-$t$ flow from an initial state $x_0$ as $\Phi^t(x_0)$, where $x\in\XX$ is the state. 
The  Perron-Frobenius operator $\P^t : L^1(\XX)\mapsto L^1(\XX)$
associated with the flow map $\Phi^t$ is defined  as 
\begin{equation}
    \int_{\AA} \left[\P^t \, \rho\right](x) \, dx = \int_{(\Phi^t)^{-1}(\AA)} \rho(x) \, dx
\end{equation}
for any   $\AA \in \mathcal{A}$, assuming that the relevant measure $\mu$ is absolutely continuous with respect to the Lebesgue measure and can thus be expressed in terms of a density $\rho$ (i.e., $d\mu(x) = \mu(dx) = \rho(x)dx$).  It can be shown that the family of these operators $\{\P^t\}_{t\geq0}$ form a semigroup, (see  \cite{lasota_1994}).  The generator of this semigroup is known as the Liouville operator, denoted $\L$, or Perron-Frobenius generator and expresses the deformation of the density $\rho$ under infinitesimal action of the operator $\P^t$ \cite{cvitanovic_chaosbook,lasota_1994}.  That is, 
\begin{equation} \label{eq:generator_continuity}
    \frac{d\rho}{dt} = \L\rho = -\nabla_x\cdot(\rho f)
\end{equation}
Alternatively, the action of the generator can be written in terms of the Perron-Frobenius operator as 
\begin{equation}\label{eq:generator_limit}
    \L\rho = \lim_{t\to 0} \frac{\P^{t}\rho - \rho}{t} = \lim_{t\to 0} \left(\frac{\P^t-\I}{t}\right)\rho
\end{equation}
where $\I$ is the identity operator. 

\begin{lemma} \label{lem_additivity}
Suppose the Liouville operator associated with a vector field $f_1:\XX \mapsto \RR^n$ is denoted by $\L_1$ and the Liouville operator associated with the vector field $f_2:\XX\mapsto \RR^n$ by $\L_2$, then the Liouville operator associated with the vector field $f(x) = f_1(x) + f_2(x)$, is $\L = \L_1 + \L_2$.
\end{lemma}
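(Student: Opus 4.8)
The plan is to work directly from the differential (PDE) characterization of the generator given in \eqref{eq:generator_continuity}, namely $\L\rho = -\nabla_x\cdot(\rho f)$, rather than from the limit definition \eqref{eq:generator_limit}. The reason to avoid \eqref{eq:generator_limit} is that the flow map $\Phi^t$ of the sum $f_1+f_2$ is \emph{not} the composition of the flows of $f_1$ and $f_2$ (these commute only in special cases), so the Perron--Frobenius operators $\P^t$ do not combine in any simple way; attempting the proof at the level of $\P^t$ would be the genuinely hard route and is unnecessary here. The divergence form, by contrast, depends on $f$ only pointwise and linearly.

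The key steps are then short. First, fix $\rho \in L^1(\XX)$ in the (dense) subspace of densities for which the expressions below are defined — i.e. $\rho$ differentiable and $f_1, f_2$ continuously differentiable, so that $\rho f_1$ and $\rho f_2$ are $C^1$ vector fields and their divergences exist. Second, write $f = f_1 + f_2$ and expand $\rho f = \rho f_1 + \rho f_2$. Third, apply \eqref{eq:generator_continuity} to $f$ and use linearity of the divergence operator, $\nabla_x\cdot(\rho f_1 + \rho f_2) = \nabla_x\cdot(\rho f_1) + \nabla_x\cdot(\rho f_2)$, to obtain
\begin{equation}
    \L\rho = -\nabla_x\cdot(\rho f) = -\nabla_x\cdot(\rho f_1) - \nabla_x\cdot(\rho f_2) = \L_1\rho + \L_2\rho,
\end{equation}
where the last equality is just \eqref{eq:generator_continuity} applied to $f_1$ and to $f_2$ separately. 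Finally, since this holds for all $\rho$ in the common domain, conclude $\L = \L_1 + \L_2$ as operators.

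I do not expect a serious obstacle. The only point requiring any care is domain-theoretic: the generator of a strongly continuous semigroup is in general only densely defined, so one should either restrict attention to a core (e.g. smooth compactly supported densities) on which all three operators act classically via the divergence formula, or remark that \eqref{eq:generator_continuity} is understood in the weak/distributional sense, in which case linearity of the distributional divergence gives the identity on all of $L^1(\XX)$ without further ado. Either way the argument is a one-line consequence of the linearity of $\nabla_x\cdot$, so the bulk of a careful write-up is simply stating the regularity assumptions under which \eqref{eq:generator_continuity} is being used.
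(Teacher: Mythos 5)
Your proof is correct and takes essentially the same route as the paper's: apply the divergence form of the generator from Eq.~\ref{eq:generator_continuity} and use linearity of $\nabla_x\cdot$. The additional remarks on domains and the reason to avoid the limit definition are sensible but not needed for the paper's argument.
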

\begin{proof} 
The proof is a direct consequence of Eq. \ref{eq:generator_continuity}.  Suppose $f(x) = f_1(x) + f_2(x)$. Then $\L \rho = -\nabla_x \cdot (\rho (f_1 + f_2)) = -\nabla_x \cdot (\rho f_1)  -\nabla_x \cdot (\rho f_2) = (\L_1+\L_2) \rho.$ 
\end{proof}

The Koopman operator $\K^t : L^\infty(\XX) \mapsto L^\infty(\XX)$ propagates observable functions forward in time along trajectories of the system and is defined as
\begin{equation}
    [\K^th](x) = [h\circ\Phi^t](x)
\end{equation}
where $h(x)$ is an observable. The Koopman and Perron-Frobenius operators are adjoint to one another, 
\begin{equation} \label{eq:adjoint}
    \int_{\XX}[\K^th](x)\rho(x)dx = \int_{\XX}h(x)[\P^t\rho](x) dx \,.
\end{equation}


\begin{figure*} 
    \centering
    \includegraphics[width=0.19\linewidth]{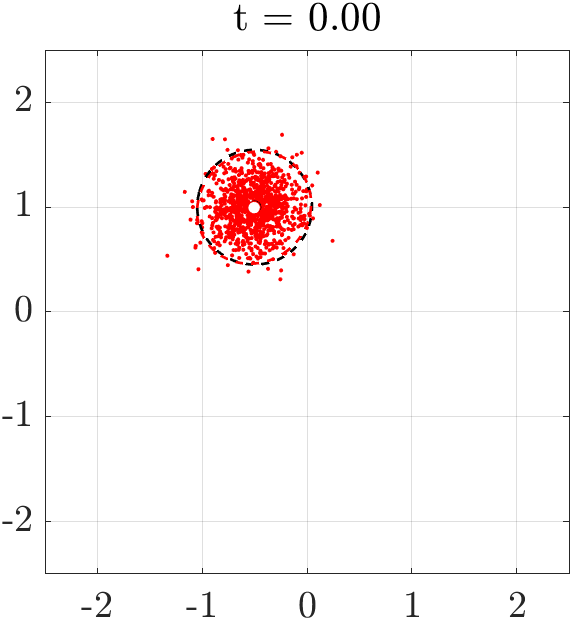}
    \includegraphics[width=0.19\linewidth]{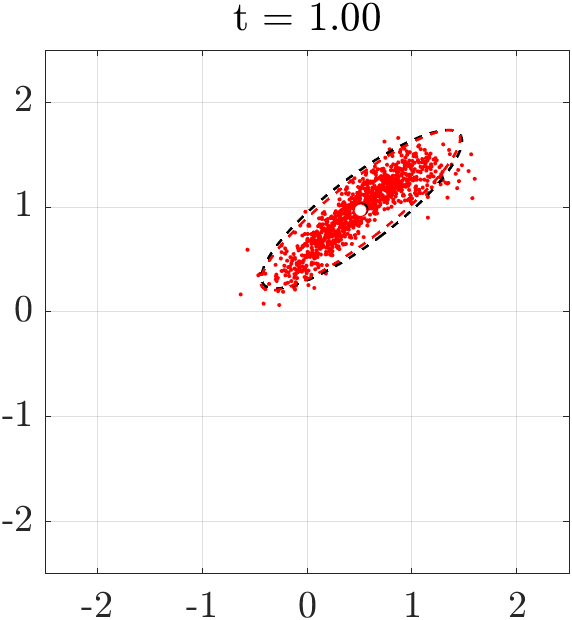}
    \includegraphics[width=0.19\linewidth]{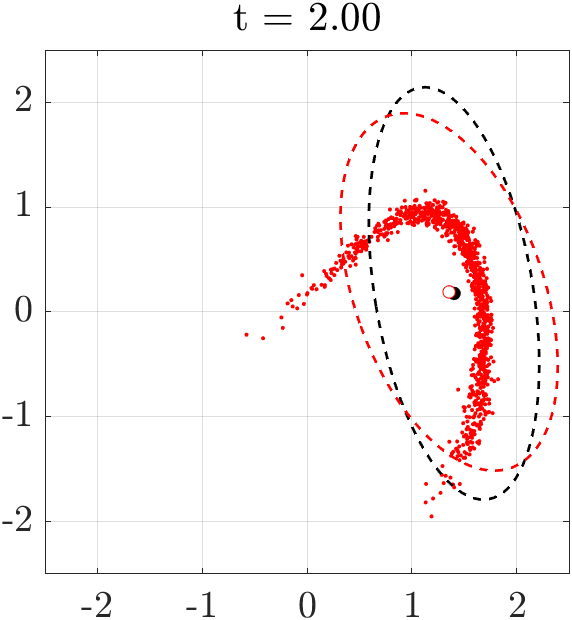}
    \includegraphics[width=0.19\linewidth]{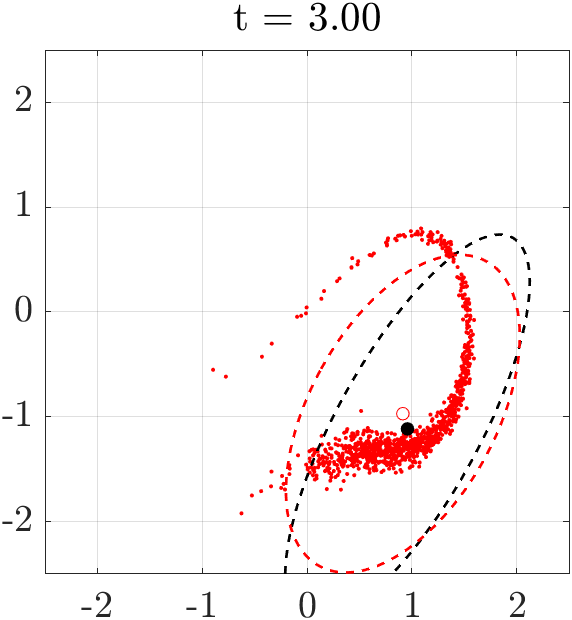}
    \includegraphics[width=0.19\linewidth]{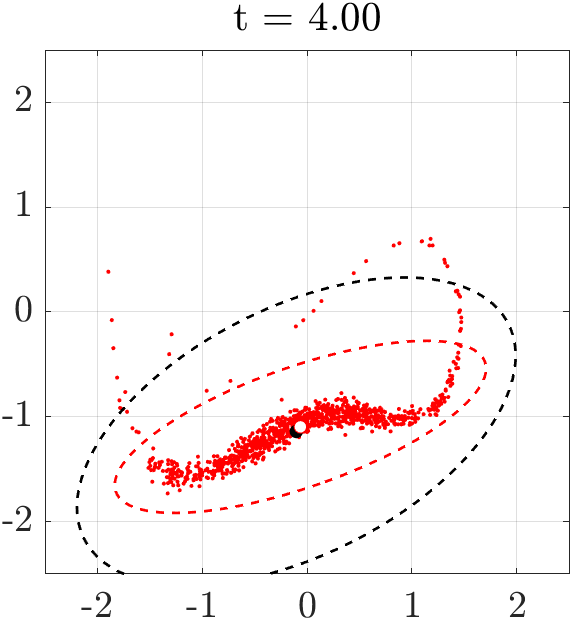}
    \caption{Moment propagation of proposed method for a Duffing oscillator with sinusoidal forcing.  Red points show trajectories from initial conditions sampled from the initial density, $\rho(x(0))\sim\N([-0.5;1],0.05I)$. Red circle and red ellipse show the sample mean and 2$\sigma$ sample covariance ellipse, respectively.  The black circle and black ellipse are the predicted mean and $2\sigma$ covariance ellipse. }
    \label{fig:duffing_ol_snapshots}
\end{figure*}

\section{Numerical approximation of the Perron-Frobenius operator and generator }
Here, we implement extended dynamic mode decomposition (EDMD) \cite{williams2015data} for the computation of the Perron-Frobenius operator, which we outline below, largely following \cite{Klus16_onthenumerical}.  

We begin by selecting a dictionary $\DD$ of $k$ scalar-valued basis functions, $\DD = \{\psi_1, \psi_2, \dots, \psi_k\}$, where $\psi_i:\XX\mapsto\RR$ for $i = 1, \dots, k$, and denote by $\VV$ the function space spanned by the elements of $\DD$. We then collect trajectory data with fixed timestep, $\Delta t$, arranged into snapshot matrices as 
\begin{align}
X &= 
\begin{bmatrix} \label{eq:snapshotx}
x_1 & , ~\cdots~ , & x_m 
\end{bmatrix}\\
Y & = 
\begin{bmatrix} 
x_1^+ &, ~\cdots~, & x_m^+ 
\end{bmatrix}
\end{align}
where the subscript $i=1,\dots, m$ is a measurement index and $x_i^+ = \Phi^{\Delta t}(x_i)$. 

We then approximate the observable function $h$ and density $\rho$ in Eq. \ref{eq:adjoint} by their projections onto $\VV$.  That is, 
\begin{align}
    h(x) &\approx \hat{h}^T\Psi(x)\\
     \rho(x) &\approx \Psi^T(x) \hat{\rho} \label{eq:dens_proj} 
\end{align}
where $\hat{h},\,\hat{\rho}\in\RR^k$ are column vectors containing the projection coefficients and $\Psi:\XX\mapsto\RR^k$ is a column-vector valued function where the elements are given by $[\Psi(x)]_i = \psi_i(x)$. 
Substituting these expansions into Eq. \ref{eq:adjoint}, we have 
\begin{equation}
    \int_{\XX} \K^{\Delta t}[\hat{h}^T\Psi]\Psi^T\hat{\rho} \, dx 
    = 
    \int_{\XX} \hat{h}^T\Psi \P^{\Delta t}[\Psi^T\hat{\rho}]\,dx\,.
    \label{eq:adj_expanded}
\end{equation}
Then replacing $[\K^{\Delta t}\Psi](x) = \Psi(x^+)$ and assuming that $\P^{\Delta t}$ can be approximated by a matrix $P$ operating on the coordinates $\hat{\rho}$
gives a least-squares problem for the matrix $P$
\begin{equation}
    \min_P \|\Psi_Y\Psi_X^T - \Psi_X\Psi_X^TP\|_2^2
\end{equation}
where $\Psi_X$,$\Psi_Y \in\RR^{k\times m}$ are matrices with columns formed by evaluating $\Psi$ on the columns of $X$ and $Y$ respectively.  The analytical solution of this least squares problem is 
\begin{equation}
    P = \left(\Psi_X\Psi_X^T\right)^\dagger\Psi_Y\Psi_X^T 
\end{equation}
where $(\cdot)^\dagger$ is the Moore-Penrose pseudoinverse. 

Given this matrix approximation of the operator, $P$, if the timestep $\Delta t$ chosen in the data collection is sufficiently small, the corresponding matrix approximation $L$ of the Perron Frobenius generator can be approximated based on the limit definition of the generator in Eq. \ref{eq:generator_limit}. as 
\begin{equation}
    L\approx \frac{P - I_k}{\Delta t}
\end{equation}
where $I_k$ is the $k\times k$ identity matrix.  Just as the matrix operator $P$ approximates the propagation of a density function $\rho$ by advancing the projection coordinates $\hat{\rho}$ forward for a finite time, the approximation of the generator allows us to approximate the infinitesimal action of the operator $\P^t$ by approximating the time derivative of the projection coordinates 
\begin{equation}
    \frac{d\hat{\rho}}{dt} = L\hat{\rho}\,.
\end{equation}

\subsection{Extension to controlled systems} \label{sec:gen_control}
In the context of applying the Koopman operator to control systems, several recent works have noted the usefulness of formulating the problem in terms of the Koopman generator, rather than the Koopman operator itself  (see \cite{goswami2017globalbilin,klus2020gedmd,rowley2020interpolated} and others), 
which typically results in a lifted system that is bilinear in the control and lifted state. This approach allows for a better approximation of the effects of control, especially for systems in control-affine form 
\begin{equation}
    \dot{x} = f(x) + \sum_{i = 1}^{n_c}g_i(x)u_i
\end{equation}
as it expresses the effect of the control vector fields $g_i$ in a way that is also dependent on the lifted state.  
Here we apply a similar approach to the density transport problem, expressed in terms of the Perron-Frobenius generator. As shown in Ref. \cite{rowley2020interpolated} for the Koopman generator, by the property of the Perron-Frobenius generator given in Lemma \ref{lem_additivity}, if the dynamics are control-affine, then the generators are also control affine, as can be seen by application of Eq. \ref{eq:generator_continuity}.  This leads to density transport dynamics of the following form
\begin{equation}\label{eq:dens_dyn}
    \frac{d}{dt}\rho(x) = (\L_0\rho)(x) + \sum_{i=1}^{n_c}u_i(\B_i\rho)(x)
\end{equation}
where $\L_0$ is the Perron Frobenius generator associated with the vector field $f(x)$ and similarly, the $\B_i$ are the Perron Frobenius generators associated with the control vector fields $g_i(x)$. 
Therefore, given the finite dimensional approximation of these generators, we can approximate the density transport dynamics as 
\begin{equation} \label{eq:dens_dyn_fin}
    \frac{d\hat{\rho}}{dt} = L_0\hat{\rho} + \sum_{i=1}^{n_c} u_iB_i\hat{\rho}
\end{equation}
where the matrices $L_0$ and $B_i$ are the matrix approximations of the operators in Eq. \ref{eq:dens_dyn}.


\subsection{Propagation of moments} \label{sec:moments}
In order to make the problem of controlling a density function using a finite dimensional control input well-posed, we formulate the problem as a control problem of a finite number of outputs. In particular, we will describe the density function, in terms of a finite number of its moments. 
For a scalar $x$, recall that the $i^{\text{th}}$ raw moment, $m_i$ is defined as \(m_i=\int_\XX x^i \rho(x) dx \) and the $i^{\text{th}}$ central moment, $\mu_i$ about the mean $m_1$ is \(\mu_i=\int_\XX (x-m_1)^i \rho(x) dx\).

Given a projection of the density as in Eq. \ref{eq:dens_proj}, the mean is approximated as 
\begin{equation}\label{eq:m1}
    m_1^i = \int x_i \rho(x) dx =\hat{\rho}^T \int x_i\Psi(x) dx 
\end{equation}
which simply indicates that the mean of the density can be written as a summation of the means of the dictionary functions, weighted by the projection coefficients. For higher moments, if the central moment is considered, it will be polynomial in the projection coefficients due to its dependence on the mean, whereas the raw moments remain linear in the projection coefficients. For this reason, we choose to work with 
the raw moments, as the central moments can also be expressed in terms of the raw moments. 
%
Similarly, the second raw moment can be written as 
\begin{equation}\label{eq:m2}
m_2^{ij} = \int x^ix^j\rho(x)dx = \hat{\rho}^T\int x^ix^j\Psi(x)dx
\end{equation}
where, again, superscripts $i$ and $j$ are coordinate indices. 

\begin{figure*}
    \centering
    \includegraphics[width=0.19\linewidth]{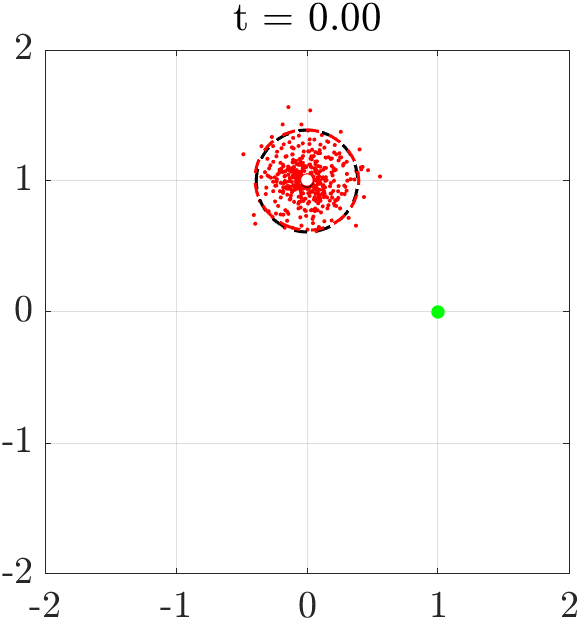}
    \includegraphics[width=0.19\linewidth]{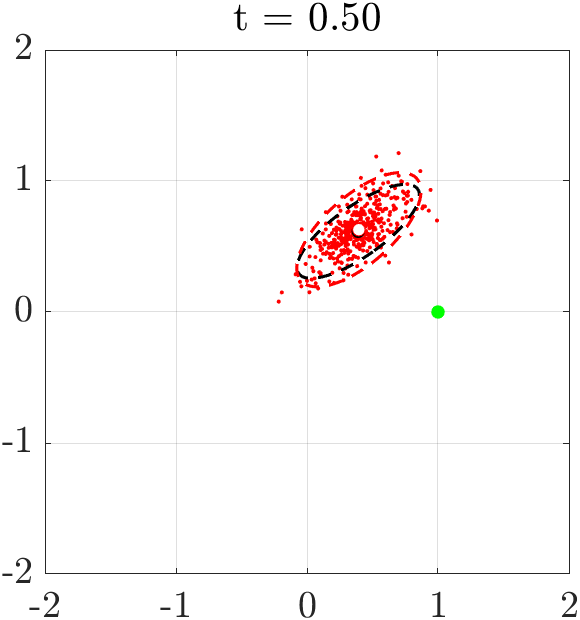}
    \includegraphics[width=0.19\linewidth]{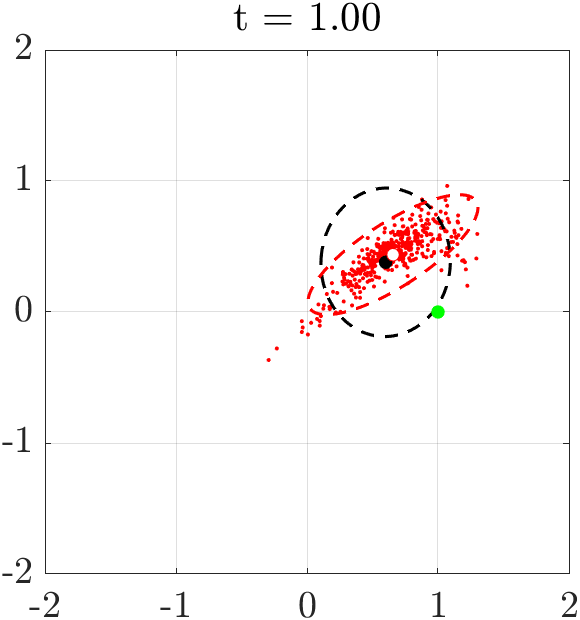}
    \includegraphics[width=0.19\linewidth]{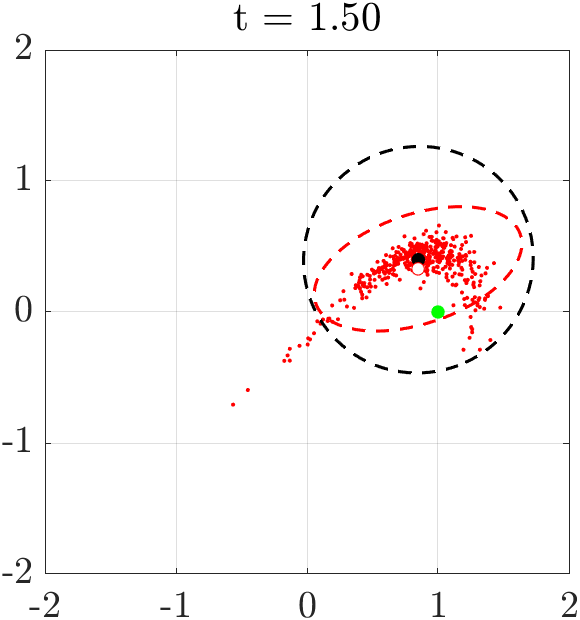}
    \includegraphics[width=0.19\linewidth]{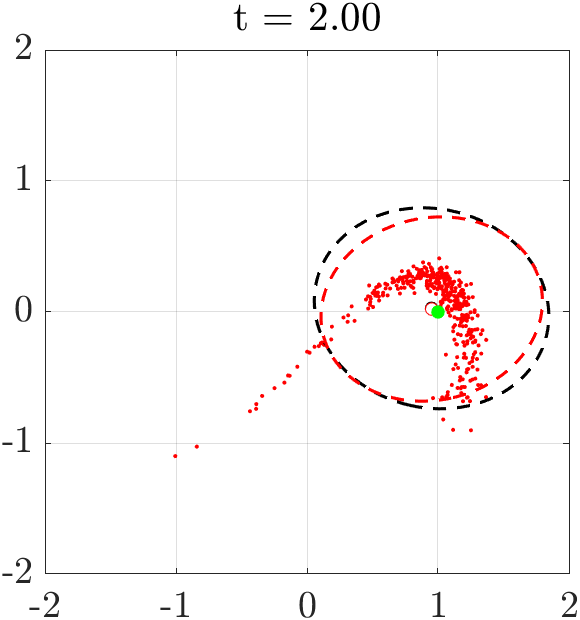}
\caption{Controlled transport through the Duffing system from an initial density $\rho(x(0))\sim\N([0;1],0.025I)$} toward the equilibrium point at $(0,1)$ (green circle). Red circle and red ellipse show the sample mean and 2$\sigma$ sample covariance ellipse, respectively.  The black circle and black ellipse are the predicted mean and $2\sigma$ covariance ellipse. 
    \label{fig:duffing_control_snapshots}
\end{figure*}

\begin{figure}
    \centering
    \includegraphics[width=\linewidth]{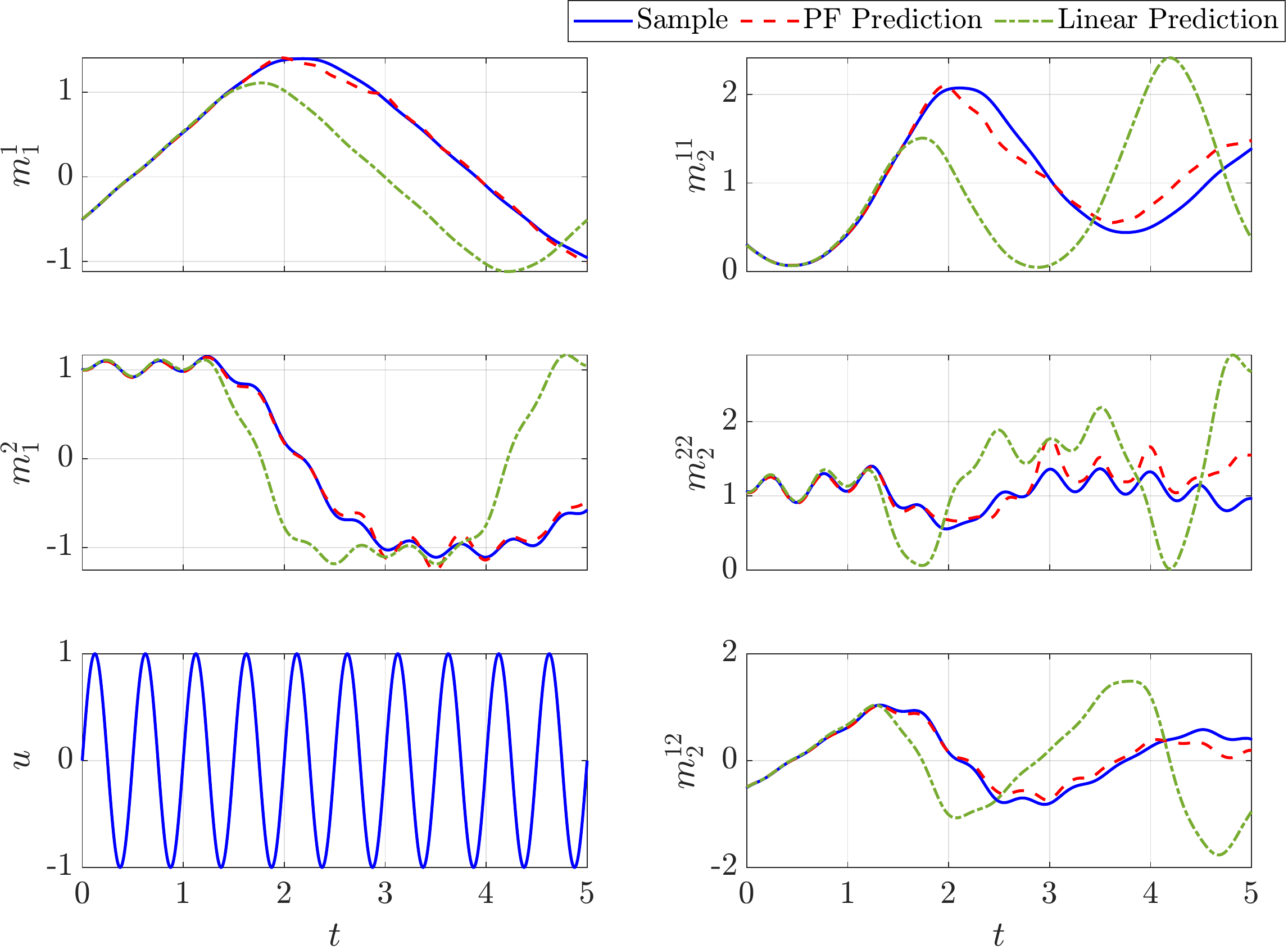}
    \caption{Moment propagation for a forced Duffing oscillator.  Left, top: First raw moment (mean).  Left, bottom: sinusoidal control signal. Right: 2nd raw moment. 
        }
    \label{fig:duffing_ol_moments}
\end{figure}
\subsection{Numerical example} \label{sec:example_prop}
To illustrate the ability of the proposed framework to propagate density functions forward in time, we consider the propagation of an initial density for a forced Duffing oscillator system, given by 
\begin{equation} \label{eq:duffing}
    \frac{d}{dt}\begin{pmatrix}
        x_1\\x_2
    \end{pmatrix}
     = 
  \begin{pmatrix}
         x_2\\
         x_1 - x_1^3 + u
     \end{pmatrix}
\end{equation}
where $u$ is the control input.  For the purpose of this simulation, we set $u(t) = \sin(4\pi t)$ and the prediction results are shown in Figs. \ref{fig:duffing_ol_snapshots} and \ref{fig:duffing_ol_moments}.  For the generator calculation, a dictionary of Gaussian radial basis functions is used where the centers lie on an evenly spaced $30\times 30$ grid ranging from $-2.5$ to $2.5$ in $x_1$ and $x_2$.  The operators are approximated using data collected from short time trajectories with $\Delta t=0.005$ for a $50\times 50$ grid of initial conditions on the same region. The predicted moment is compared to the sample moment obtained from $1000$ trajectories from initial conditions sampled according to the initial density. 
We see that the moment propagation of the proposed method is good for at least 3 seconds, which motivates the use of this method in a control formulation, as detailed in the following sections. 
Also shown for comparison in Fig. \ref{fig:duffing_ol_moments} is a linear prediction, which is computed by propagating the initial Gaussian through a linearization of Eq. \ref{eq:duffing}, where the linearization is re-computed at each timestep about the predicted mean, as is commonly done in the a priori prediction step of an extended Kalman filter. 


\section{Control formulation }
We have shown in Sec. \ref{sec:gen_control} and \ref{sec:moments} that the problem of steering a density $\rho$ to a desired density can be expressed as an output tracking problem on a lifted, bilinear system given by Eq. \ref{eq:dens_dyn_fin}, where the projection coefficients $\hat{\rho}$ can be interpreted as the lifted state. 
Then, if the raw moments are taken to be the relevant output, the output, $y$ is linear in the lifted state, $y = C\hat{\rho}$, where the elements of the output matrix $C$ are given by rewriting Eqs. \ref{eq:m1}, \ref{eq:m2} in matrix form. 

For the optimal output tracking problem, we consider a discrete time optimal control problem 
\begin{subequations}
\begin{align}
    \min_{u_1,u_2,\dots,u_{H-1}} &\sum_{t=1}^{H-1}l(\rh_t,u_t) + l_H(\rh_H)\\[1ex]
    \mathrm{s.t.} \qquad & \hat{\rho}_{t+1} = F(\hat{\rho}_t,u_t) \label{eq:disc_dyn} \\
    & y_t = C\rh_t
\end{align}
\end{subequations}
where $H$ is the number of timesteps in the time horizon and Eq. \ref{eq:disc_dyn} represents the discrete time version of Eq. \ref{eq:dens_dyn_fin}.

In particular, for output tracking, we consider a quadratic cost of the form
    \begin{align}
        l(\rh_t,u_t) &= (y_t - y^{\mathrm{ref}}_{t})^TS(y_t - y^{\mathrm{ref}}_{t}) + u_t^TRu_t 
        \label{eq:cost_x}
        \\
        l_H(\rh_H) &= (y_H - y^{\mathrm{ref}}_{H})^TS_H(y_H - y^{\mathrm{ref}}_{H}) 
    \end{align}
where $S$ and $R$ are weighting matrices which define the relative penalty on tracking error and control effort, respectively. This cost is quadratic in $\rh_t$ (with a linear term).  

It is well known that for optimal control problems on bilinear systems with quadratic cost, an effective way of solving the problem is by iteratively linearizing and solving a finite time linear quadratic regulator (LQR) problem about a nominal trajectory, utilizing the Ricatti formulation of that problem \cite{hofer1988iterative}.  For this reason, we solve the optimal control problem using differential dynamic programming (DDP) \cite{tassa2012synthesis,yakowitz1984computational}, which is closely related to the method of iterative LQR.  

\begin{figure*}
    \centering
    \includegraphics[width=\linewidth]{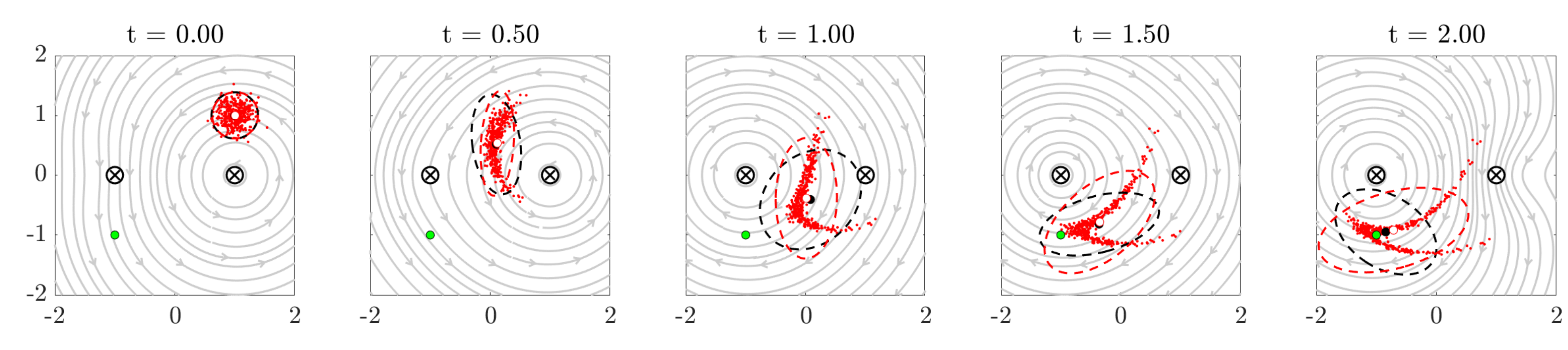}
    \caption{Controlled transport of fluid particles, driven by two rotlets, or micro-rotors, in a Stokes flow from an initial density, $\rho(x(0))\sim\N([1;1],0.025I)$ toward a target mean (green circle) at $(-1,-1)$. The rotors are located at $(-1,0)$ and $(0,1)$, as indicated by the black circle-cross.  Red circle and red ellipse show the sample mean and 2$\sigma$ sample covariance ellipse, respectively.  The black circle and black ellipse are the predicted mean and $2\sigma$ covariance ellipse.  Gray streamlines indicate the flow field produced by the rotlets at the instant shown. }
    \label{fig:rotlet_snapshots}
\end{figure*}

\section{Validation and Results}
Here we consider two examples of the density control formulation using Perron Frobenius generators.   
\subsection{Example 1: Forced Duffing oscillator}  

\begin{figure}
    \centering
    \begin{minipage}[c]{0.49\linewidth}
    \includegraphics[width=\linewidth]{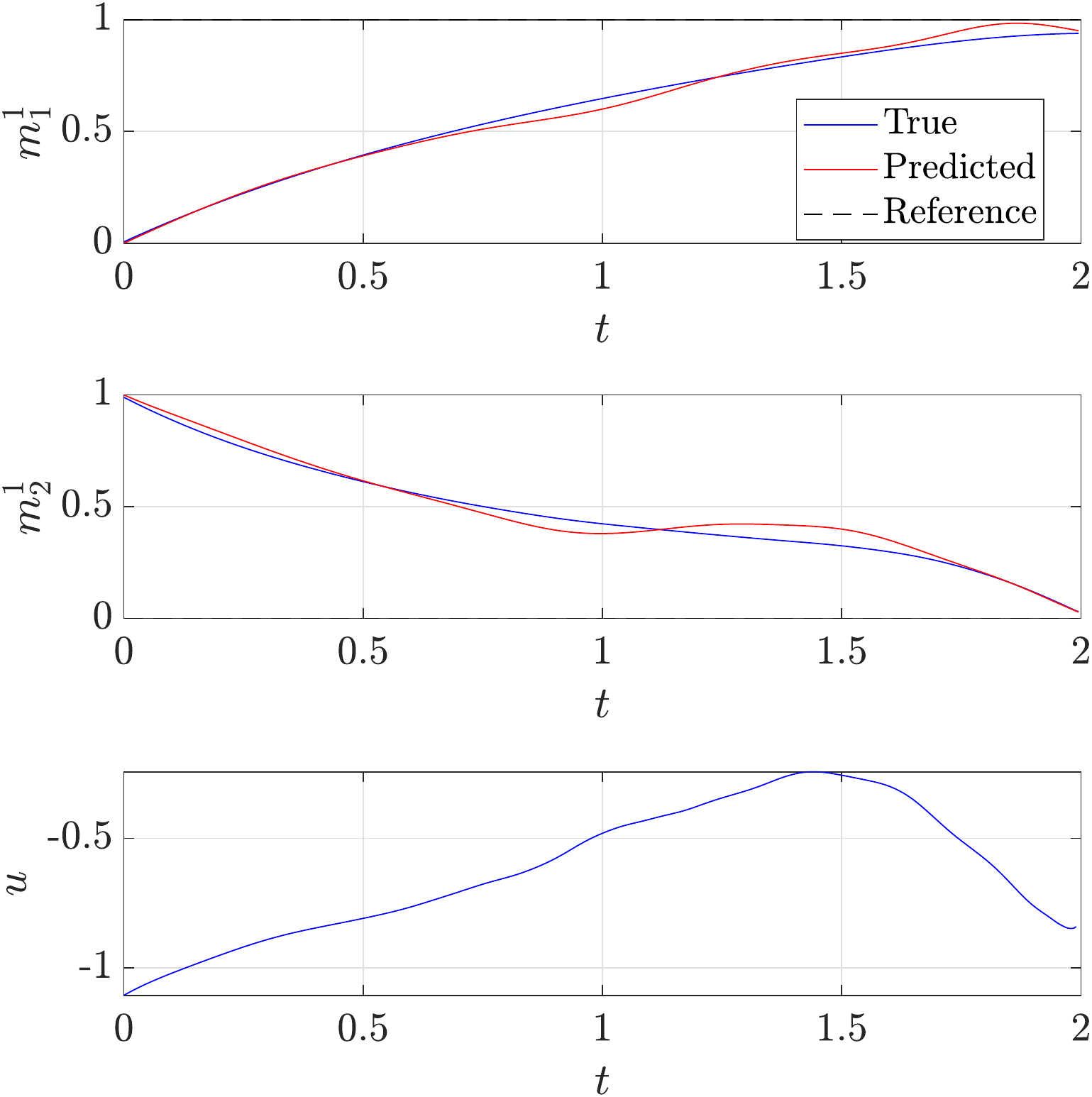}
    \end{minipage}
    \begin{minipage}[c]{0.49\linewidth}
    \includegraphics[width=\linewidth]{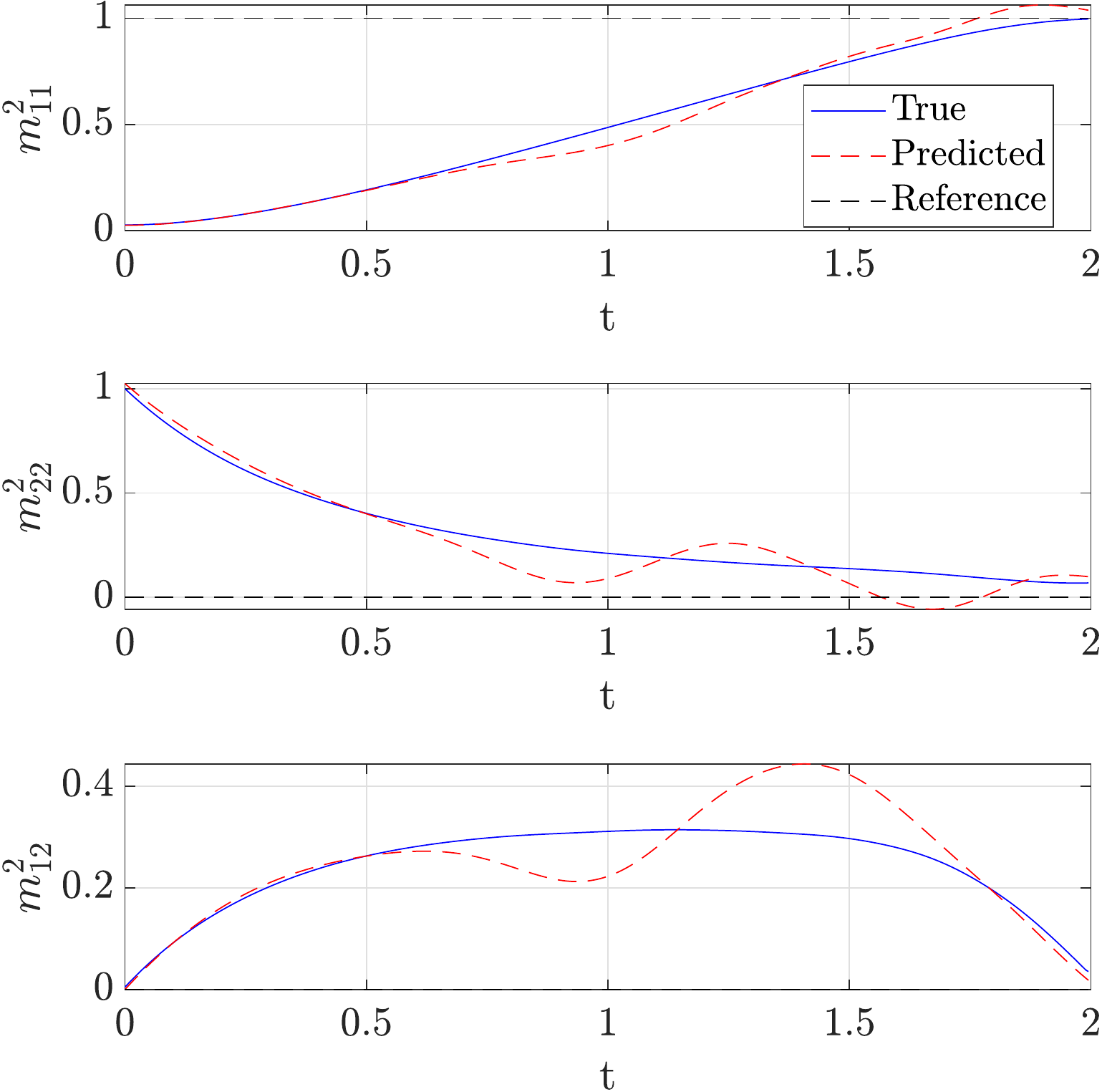}
    \end{minipage}
    \caption{Control of raw moments for the forced Duffing system. The black line indicates the target (reference).  Left, bottom: control from differential dynamic programming for the generator system. Right: 2nd raw moment.  Predicted values are from the Perron-Frobenius generator computation.  True values are given by the sample moment. }
    \label{fig:duffing_control_moments}
\end{figure}

\begin{figure}
    \centering
    \includegraphics[width=\linewidth]{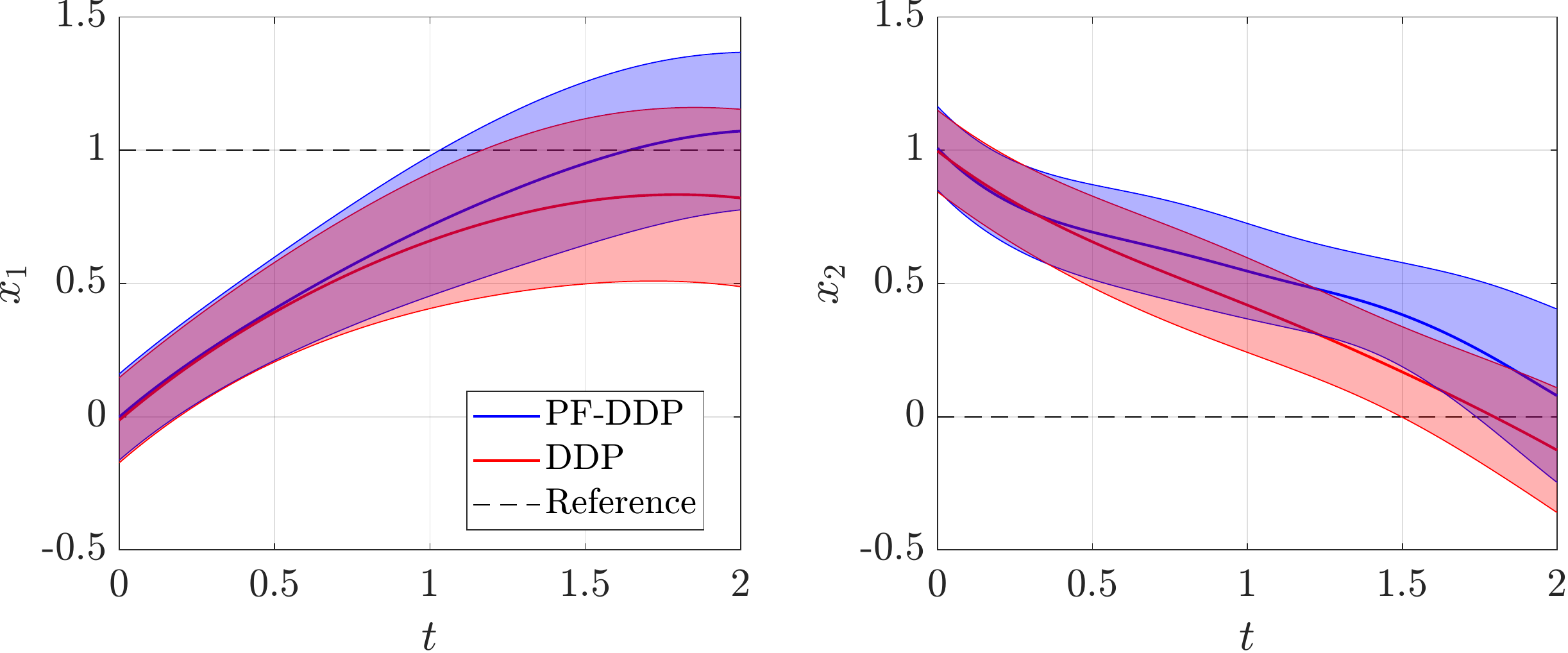}
    \caption{Comparison of the proposed Perron Frobenius generator DDP with standard DDP for the forced Duffing system.  Shaded region indicates one standard deviation }
    \label{fig:pf_ddp_comp}
    \vspace{-1em}
\end{figure}

As a first example, we consider the forced Duffing oscillator of Eq. \ref{eq:duffing}, and we use DDP to determine a control sequence to steer a Gaussian initial density $\rho(x(0))\sim\N([1;1],0.025I)$ toward the equilibrium point at $(1,0)$ over a time horizon of 2s.  The generators are approximated using the same data as described in Sec. \ref{sec:example_prop} and the same timestep of $0.005$s is used for DDP. In differential dynamic programming, the in-horizon cost weights on the reference error of the first two moments and control effort are all set to unity. The terminal cost on the reference error in the moments is set to $1000$, and only the first two moments are considered. The target raw second moment is computed from the desired mean, with the desired variance taken to be zero. The results of this computation are shown in Fig. \ref{fig:duffing_control_snapshots}-\ref{fig:duffing_control_moments}.

In Fig. \ref{fig:pf_ddp_comp}, we show a comparison of the performance of the proposed controller, labelled `PF-DDP' with a standard DDP controller, which computes a control on the Duffing system directly (rather than a lifted state), with the initial condition being the mean.  That is, the standard DDP controller acts on the mean as if it were a deterministic initial condition of the system. 
The comparison shown is the mean and one standard deviation region from trajectories from a set of 500 initial conditions sampled from the initial distribution, with each of the respective controllers applied.  
We see that the proposed controller moves the mean of the distribution closer to the reference, while maintaining nearly the same variance as the standard DDP controller. 

\subsection{Example 2: Rotor-driven Stokes flow }
As a second example, we consider the problem of steering a distribution of fluid particles in a Stokes flow, where the flow is produced by two micro-rotors.  The micro-rotors are modelled as rotlets, the Stokes flow singularity associated with a point torque in the fluid.  
For a collection of $N_r$ rotlets, this flow is given by 
\begin{equation}
    \frac{d \mathbf{x}}{dt} = \sum_{i=1}^{N_r} \frac{T_i\times (\mathbf{x} - \mathbf{\bar{x}_i})}{\|\mathbf{x} - \mathbf{\bar{x}}_i\|^3}
\end{equation}
where $\mathbf{\bar{x}}_i$ and $T_i$ denote the location and torque of the $i$-th rotor, respectively. We consider the case where there are two such rotors lying in the $x_1$-$x_2$ plane, located at $\mathbf{\bar{x}_1}=(-1,0)$ and $\mathbf{\bar{x}_2}=(1,0)$ and the controls for the problem are taken to be torques $u_1 = T_1$, $u_2=T_2$, where the direction of these torques is taken to be normal to the $x_1$-$x_2$ plane (in the positive $x_3$ direction. 

For this example, the task is to drive a Gaussian initial density 
$\rho(x(0))\sim\N([1;1],0.025I)$ toward a target mean at $(-1,-1)$ over a time horizon of 2s. Again, we consider a timestep of $\Delta t = 0.005$s for both the computation of the generators and for DDP. The in horizon costs weights for the mean error are set to 2, while the weights for the second moment error and control effort are unity. The terminal cost weight on the mean error is 1000 and the terminal cost weight on the second moment error is 500. Results from this computation are shown in Figs. \ref{fig:rotlet_snapshots} and \ref{fig:rotlet_moments}. 

We see in Fig. \ref{fig:rotlet_moments} that DDP yields a control which drives the mean to the target by first giving a significant counterclockwise torque on the right rotor to drive the density to a point between the rotors, at which point the left rotor is initiated to drive the flow with a clockwise torque, pulling the density mean near to the target. 

This example demonstrates an alternative physical interpretation of the density transport problem, where the density represents a distribution of fluid particles.  This also demonstrates the effectiveness of the proposed method on a system which is linear in the controls, but in which the control vector fields are  nonlinear.

\begin{figure}
    \centering
    \begin{minipage}[c]{0.49\linewidth}
    \includegraphics[width=\linewidth]{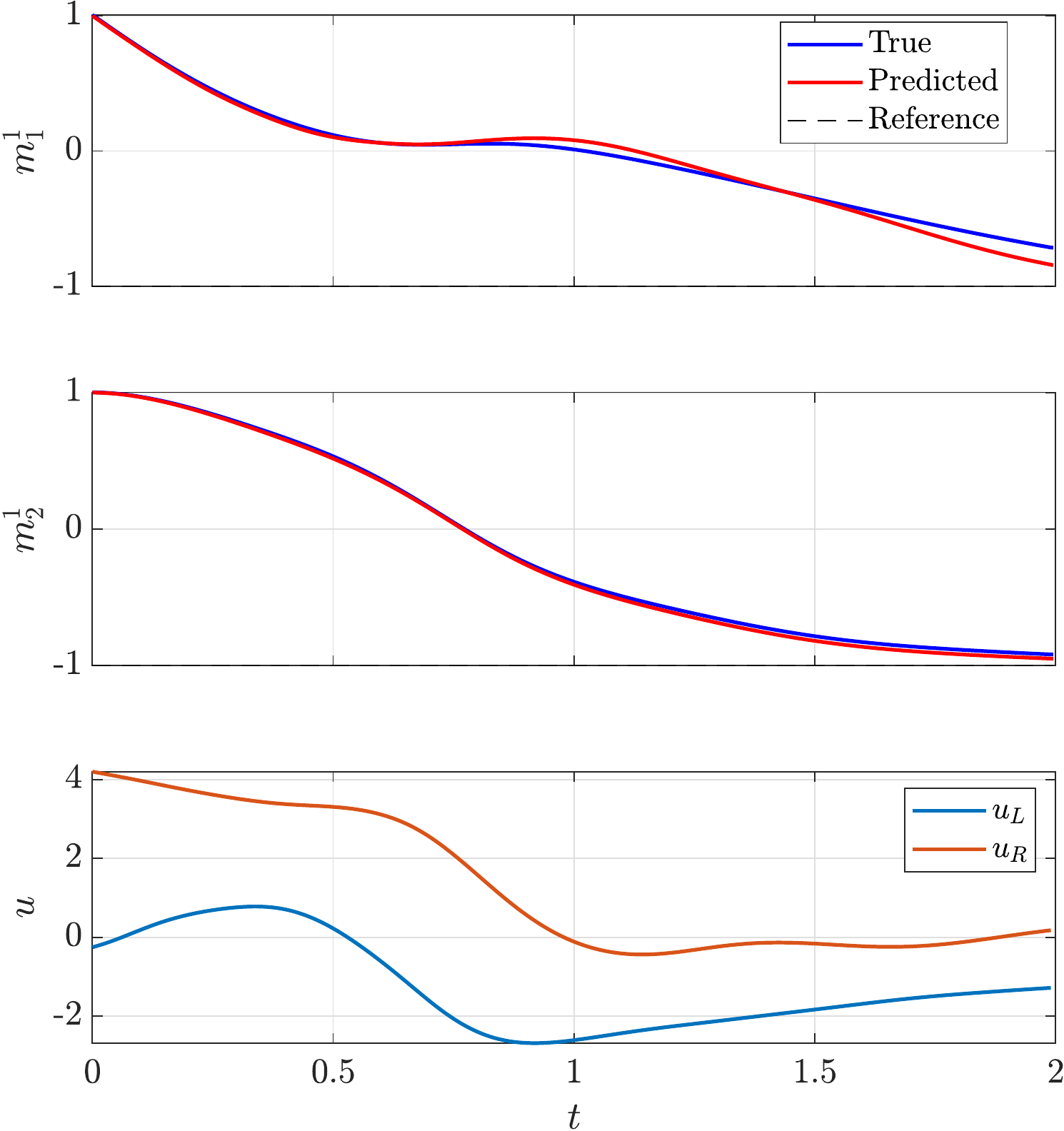}
    \end{minipage}
    \begin{minipage}[c]{0.49\linewidth}
    \includegraphics[width=\linewidth]{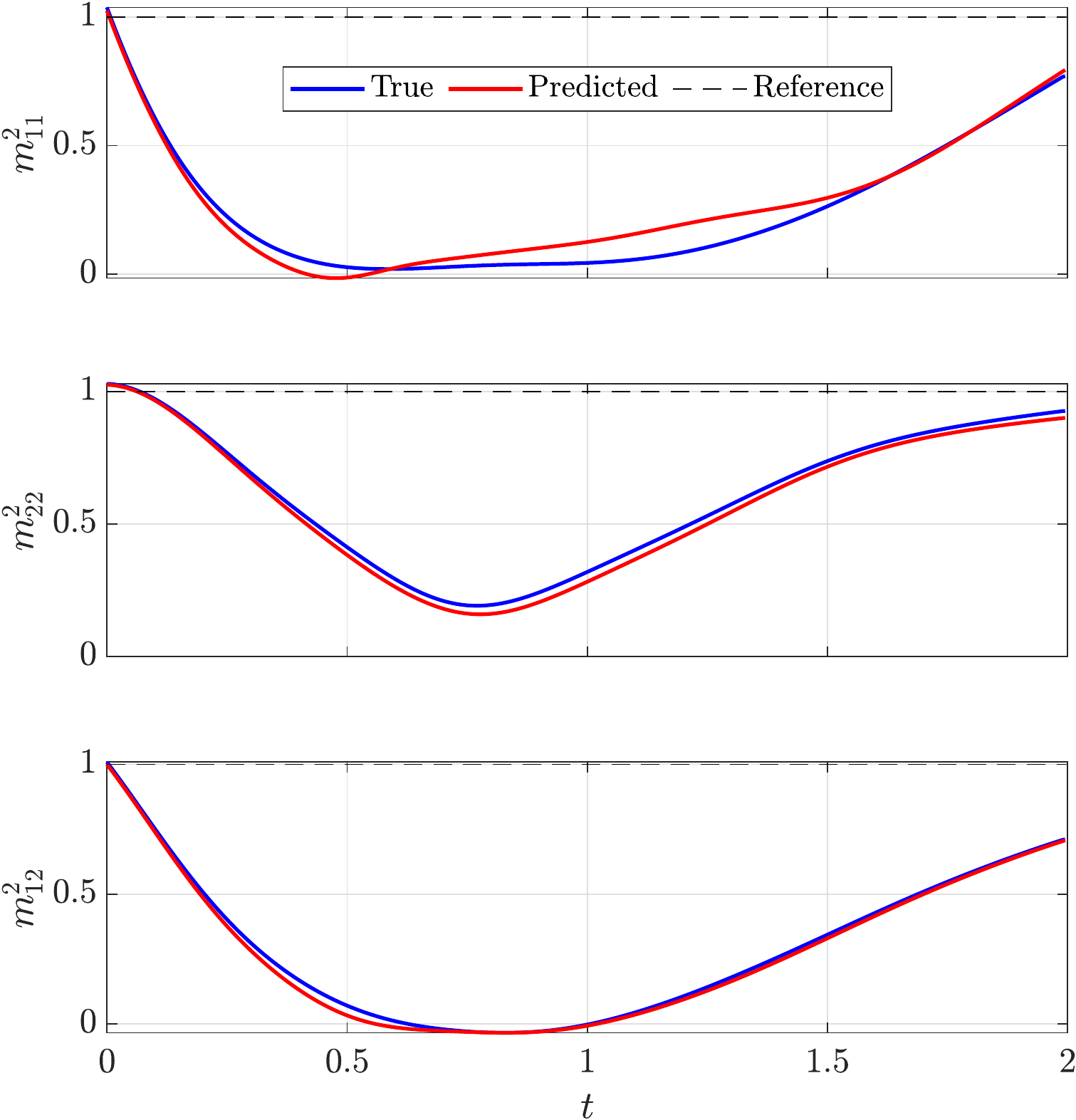}
    \end{minipage}
    \caption{Control of raw moments for the rotlet-driven Stokes flow. The black line indicates the target (reference). Left, top: 1st raw moment. Left, bottom: control inputs $u_L$ and $u_R$ for the left and right rotor.  Right: 2nd raw moment.  Predicted values are from the Perron-Frobenius generator computation.  
    True values are given by the sample moment.
    }
    \label{fig:rotlet_moments}
    \vspace{-1em}
\end{figure}



\section{Conclusion}
In this work, we have studied the problem of transporting density functions of states through a controlled dynamical system.  This problem formulation has applications both in fluid mechanics and in the control of uncertain systems.  Our approach is based on approximations of the Perron-Frobenius operator, whereby we show that approximations of this operator and its generator can be used to model the density transport dynamics as a high-dimensional system which is bilinear in the lifted state and the control. 
We demonstrated this approach on two examples, a forced Duffing system, in which the density can have the interpretation as an uncertainty in the initial state and on a rotor driven Stokes flow, in which the density formulation takes on the fluid mechanics interpretation of describing a distribution of fluid particles. Future work in these areas could include extending the proposed control formulation for use in a constrained model predictive control framework for uncertain systems or by studying the fluid transport by more realistic biological microswimmers or artificial microrobots \cite{bt_acc2020}. 

\bibliographystyle{ieeetr}
\bibliography{densitycontrol}
\end{document}